\setlist[enumerate]{itemsep=0mm}
\newcommand{\blue}{\textcolor{black}}
\newcommand{\EE}{\mathbb E}
\newcommand{\II}{\mathbb I}
\newcommand{\PP}{\mathbb P}
\newcommand{\Q}{\mathcal Q}
\newcommand{\cD}{\mathcal D}
\newcommand{\R}{\mathcal R}
\newcommand{\C}{\mathcal C}
\newcommand{\G}{\mathcal G}
\newcommand{\B}{\mathcal B}
\newtheorem{theorem}{Theorem}
\newtheorem{corollary}{Corollary}
\begin{document}

\title[Expected number of uninhibited  RAF sets] {The expected number of viable autocatalytic sets in \blue{chemical reaction systems}}
\author{Stuart Kauffman and Mike Steel}

\bigskip

\begin{abstract}
The emergence of self-sustaining autocatalytic networks in chemical reaction systems has been studied as a possible mechanism for modelling how
living systems first arose.  It has been known  for several decades that such networks will form within systems of polymers (under cleavage and ligation reactions)  under a simple process of random catalysis, and  this process has since been mathematically analysed. 
In this paper, we provide an exact expression for the expected number of self-sustaining autocatalytic networks that will form in a general chemical reaction system, and the expected number of these networks that will also be uninhibited (by some molecule produced by the system).  Using these equations, we are able to describe the patterns of catalysis and inhibition that maximise or minimise the expected number of such networks. We apply our results to derive a general theorem concerning the trade-off between catalysis and inhibition, and to provide some insight into the extent to which the expected number of self-sustaining autocatalytic networks coincides with the probability that at least one such system is present.

\end{abstract}

\maketitle

\noindent {\em Address:}
                 M. Steel (corresponding author):\\
              Biomathematics Research Centre, \\University of Canterbury, Christchurch, New Zealand \\
              Tel.: +64-33667001\\
              \email{mike.steel@canterbury.ac.nz}  
              \\
              \\
             Stuart Kauffman\\
             Affiliate Professor Institute for Systems Biology \\
             Emeritus Professor, Biochemistry and Biophysics, University of Pennsylavania, PA, USA\\
              \email{stukauffman@gmail.com}  
              \\
              \\

\noindent {\em Keywords:}  autocatalytic network, catalysis, inhibition, random process

\newpage

\section{Introduction}

A key step in the origin of life is the formation of a metabolic network that is both self-sustaining and collectively autocatalytic \cite{ars, hay, liu, vai, vas, xav}. Systems that combine these two general properties have been studied  within a formal framework that is sometimes referred to as  RAF theory \cite{hor17}. 
We give precise definitions shortly but, roughly speaking, a `RAF'  \blue{(=Reflexively Autocatalytic and F-generated)} set is a subset of reactions where the reactants and at least one catalyst of each reaction in the subset can be produced from an available food set by using reactions from within the subset only.

The study of RAFs traces back to pioneering work on `collectively autocatalytic sets'  in polymer models of early life \cite{kau71, kau86}, which was subsequently developed  mathematically (see \cite{hor19, hor17} and the references there-in). RAF algorithms have been applied recently to investigate the traces of earliest metabolism that can be detected in large metabolic databases  across bacteria and archaea \cite{xav}, leading to the development of an open-source program to analyse and visualise RAFs in complex biochemical systems \cite{cat}.  RAF theory  overlaps with other graph-theoretic approaches in which the emergence of directed cycles in reaction graphs plays a key role \cite{bol, j1, j2}, and  is also related to (M, R) systems \cite{cor, jar10} and chemical organisation theory \cite{dit2}.

 RAF theory has  also been applied in other fields, including ecology \cite{caz18} and cognition \cite{gab17}, and the ideas may have application in other contexts. In economics, for instance,  the production of consumer items can be viewed as a catalysed reaction; for example,  the production of a wooden table involves nails and wood (reactants) and a hammer (a catalyst, as it is not used up in the reaction but makes the reaction happen much more efficiently) and the output (reaction product) is the table. On a larger scale, a factory is  a catalyst for the production of the items produced in it from reactants brought into the factory.  In both these examples, notice that each reactant may either be a raw material (i.e. the elements of a `food set') or a  products of other (catalysed) reactions, whereas the products may, in turn, be reactants, or catalysts,  for other catalysed reactions. Products can sometimes also {\em inhibit} reactions; for example, the production of internal combustion engines resulted in processes for building steam engines being abandoned. 

In this paper, we extend RAF theory further by investigating the impact of different modes of catalysis and inhibition on the appearance of (uninhibited) RAF subsets. We focus on the expected number of such sets (rather than on the probability that at least one such set exists which has been the focus of nearly all earlier RAF studies \cite{fil, mos}). \blue{Using a mathematical approach, we derive explicit  and exact analytical expressions for the expected number of such uninhibited RAF subsets,}  as well as providing some insight into the expected population sizes of RAFs for the catalysis  rate at which they first appear (as we discuss in Section~\ref{relation}).   \blue{In particular, we show that for simple systems, with an average catalysis rate that is set at  the level where RAFs first appear, the expected number of RAFs depends strongly on the variability of catalysis across molecules. At one extreme (uniform catalysis),  the expected number of RAFs is small (e.g. 1, or a few), while at the other extreme (all-or-nothing catalysis) the expected number of RAFs  grows exponentially with the size of the system.}

\blue{The motivation for looking at the expected number of RAFs (rather than the probability that a  RAF exists) is twofold. Firstly, by focusing on expected values it is possible to present certain exact results (in Theorem~\ref{thm1}), rather than just inequalities or asymptotic results, while  still gaining some information about the probability that a RAF exists.  Secondly, in origin of life studies, it is relevant to consider populations of self-sustaining autocatalytic chemical networks, which may be subject to competition and selection, a topic which has explored by others (see e.g. \cite{sza, vas, vir}), and information concerning the likely diversity of RAFs available in a given chemical reaction system is therefore a natural question.  In previous analyses where RAFs have been identified, subsequent analysis has revealed a large number of RAFs present within the RAF; for example, for a 7-reaction RAF  in a laboratory-based study involving RNA-ribosymes (from \cite{vai}) more than half of the $2^7 = 128$ subsets of this RAF are also RAFs ({\em cf.} Fig. 5 of  \cite{ste}).  Simulation studies involving Kauffman's binary polymer model have also identified a large number of RAFs present once catalysis rises above the level at which RAFs first appear \cite{hor15}. }

\blue{The structure of this paper is as follows. We begin with some formal definitions, and then described different models for catalysis and inhibition. In Section~\ref{gensec}, we present the main mathematical result, along with some remarks, and  proof.  We then present a number of consequences of our main result, beginning with a generic result concerning  the impact of inhibition when catalysis is uniform.  We then investigate  the impact of different catalysis distributions on the expected number of RAF arising in `elementary' chemical reaction systems, focusing on the catalysis rate at which RAFs first appear.  We end with some brief concluding comments.}

\subsection{Definitions}

Let $X$ be a set of molecule types; $R$ a set of reactions, where each reaction consists of a subset of molecule types as input (`reactants') and a set of molecule types as 
outputs (`products'); and let $F$ be a subset of $X$ (called a `food set'). We refer to the triple $\Q=(X, R, F)$ as a {\em chemical reaction system with food set} and, unless stated otherwise, we impose  no  further restrictions on $\Q$ (e.g. it need not correspond to a system of polymers and a reaction can have any positive number of reactants and any positive number of products).
 
 Given a reaction $r \in R$, we let $\rho(r) \subseteq X$ denote the set of reactants of $r$ and $\pi(r)$ denote the set of products of $r$. 
 Moreover, given a subset $R'$ of $R$, we let $\pi(R') = \bigcup_{r \in R'} \pi(r).$
 
 A subset $R'$ of $R$ is said to be {\em $F$-generated} if $R'$ can be ordered $r_1, r_2, \ldots, r_{|R'|}$ so that 
 $\rho(r_1) \subseteq F$ and for each $i \in \{2, \ldots, |R|\}$, we have $\rho(r_i) \subseteq F \cup \pi(\{r_1, \ldots, r_{i-1}\})$. In other words, $R'$ is $F$-generated if the $R'$ can be built up by starting from one reaction that has all its reactants in the food set, then adding reactions in such a way that each added reaction has each of its reactants present either in the food set or as a product of a reaction in the set generated so far.

Now suppose that certain molecule types in $X$ can catalyse certain reactions in $R$.
A subset $R'$ of $R$ is said to be {\em Reflexively Autocatalytic and F-generated} (more briefly, a {\em RAF}) if $R'$ is nonempty and each reaction $r \in R'$ is catalysed by
at least one molecule type in $F \cup \pi(R')$ and $R'$ is $F$-generated.

We may also allow certain molecule types to also inhibit reactions in $R$, in which case a subset 
 $R'$ of $R$ is said to be an {\em uninhibited RAF} (uRAF) if 
$R'$ is a RAF and no reaction in $R'$ is inhibited by any molecule type in $F \cup \pi(R')$.   \blue{The notion of a uRAF was first defined and studied in \cite{mos}.}
\blue{Notice that inhibition is being applied in a strong sense:  a reaction $r$ cannot be part of a uRAF if $r$  is inhibited by at least one molecule type present, regardless of how many  molecule types are catalysts  for $r$ and present in the uRAF}.

Since a union of RAFs is also a RAF, when a RAF exists in a system, there is a unique maximal RAF. However, the same does not apply to uRAFs -- in particular, the union of two uRAFs can fail to be a uRAF.  
These concepts are illustrated in Fig.~\ref{fig1}.
 \begin{figure}[h]
\centering
\includegraphics[scale=1.1]{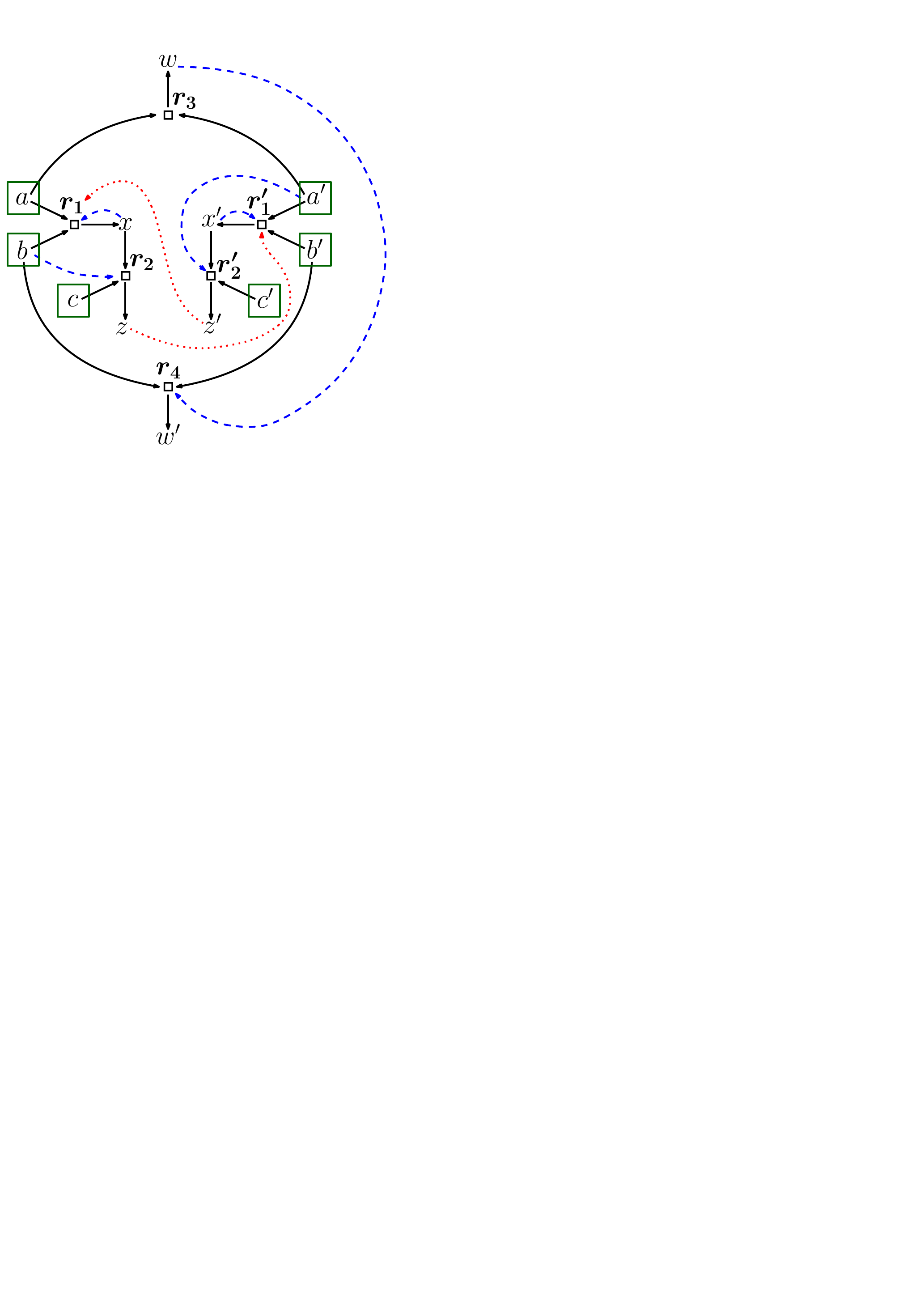}
\caption{A chemical reaction system consisting of  the set of molecule types $X=\{a, b, c, a', b', c', x, x', w,w', z,z'\}$, a food set $F=\{a, b, c, a', b', c'\}$ \blue{(each placed inside a green box)}  and the reaction set 
$R=\{r_1,  r_2,  r_1', r_2', r_3, r_4\}$ \blue{(bold, beside small white-filled squares)}.  Solid arcs indicate two reactants entering a reaction and a product coming out.  
Catalysis is indicated by dashed arcs (blue) and inhibition (also called blocking) is indicated by dotted arcs (red).   The full set of
reactions is not a RAF, but it contains several RAFs that are contained in the unique maximal RAF $R'=\{r_1, r_1', r_2, r_2'\}$ (note that $r_4$ is not part of this RAF even though it is catalysed and the reactants of $r_4$ are present in the food set). 
 The maximal RAF $R'$ is not a uRAF (e.g. $r'_1$ is inhibited by $z$ which is a product of $r_2$); however, $\{r_1, r_2\}$ and $\{r_1', r_2'\}$ are uRAFs, and so are $\{r_1\}, \{r_1'\}$ and $\{r_1, r_1'\}$.  }
\label{fig1}
\end{figure}

 \section{Modelling catalysis and inhibition} 
We will model catalysis and also blocking (inhibition) by random processes.
To provide for greater generality, we allow the possibility that elements in a subset $C^{-}$ (respectively, $B^{-}$) of the  food set cannot catalyse (respectively block) any reaction in $R$.  Let $c=|F \setminus C^{-}|$ and $b=|F \setminus B^{-}|$. Thus $c$  (respectively $b$) is the number of food elements that are possible catalysts (respectively blockers). 

Suppose that each molecule type $x \in X\setminus C^{-}$ has an associated probability $C_x$ of catalysing any given reaction in $R$. \blue{The values $C_x$ are sampled independently from a distribution $\cD$, for each $x \in X$.}
  This results in a random assignment of catalysis  (i.e. a random subset $\chi$ of $X \times \R$), where $(x,r)  \in \chi$ if $x$ catalyses $r$. Let
$\C_{x,r}$ be the event that $x$ catalyses $r$.

We assume that:
\begin{itemize} 
\item[($I_1$)]  $\C=(C_x, x\in X\setminus C^{-})$ is a collection of independent random variables.
\item[($I_2$)]  Conditional on $\C$, $(\C_{x,r}:  x\in X \setminus C^-, r \in R)$ is a collection of independent events.
\end{itemize}
 Since the distribution of $C_x$ is the same for all $x \in X\setminus C^-$, we \blue{will use $C$ to denote an arbitrary random variable sampled from the distribution $\cD$.} Let 
$\mu_C = \EE[C]$ and, for $i\geq 0$, let $\lambda_i$ be the $i$--th moment of $1-C$; that is:
$$\lambda_i =\EE[(1-C)^i].$$
Although our results concern general catalysis distributions, we will pay particular attention to three forms of catalysis \blue{which have been considered in previous studies (e.g. \cite{hor16}), and which will be compared in our analyses.}
\begin{itemize}
\item The  {\em uniform model:}  Each $x \in X\setminus C^-$ catalyses each reaction in $\R$ with a fixed probability $p$. Thus, $C =p$ with probability 1, and so $\mu_C = p$.  
\item The   {\em sparse model:}  $C= u$ with probability $\pi$ and $C =0$ with probability $1-\pi$, and so $\mu_C = u \pi$.
\item
The  {\em all-or-nothing model:} $C=1$ with probability $\pi$ and $C=0$ with probability $1-\pi$, and so $\mu_C = \pi$.
\end{itemize}

The uniform model is from Kauffman's binary polymer network and has been the default for most recent studies involving polymer models \cite{hor17}. 
 More  realistic catalysis scenarios  can be modelled by allowing $C$ to take a range of values 
values around $\mu_C$ with different probabilities. The   {\em sparse model}  generalises the uniform model slightly by allowing a (random) subset of molecule types to be catalysts. In this model, $\pi$ would typically be very small in applications (i.e.  most molecules are not catalysts but those few that are will catalyse a lot or reactions, as in the recent study of metabolic  \blue{origins, described in} \cite{xav}).  The all-or-nothing model is a  special case of the sparse model. 
The emergence of RAFs in these models (and others, including a power-law distribution) was  investigated in  \cite{hor16}. 

For these three models, the associated $\lambda_i$ values are given as follows: $\lambda_0=1$, and 
 for all $i\geq 1$:
\begin{equation}
\label{mu-eq}
\lambda_i = \begin{cases}
(1-\mu_C)^i, & \mbox{(uniform model)};\\
1-\pi + \pi(1-u)^i, & \mbox{(sparse model)};\\
1-\mu_C, & \mbox{(all-or-nothing model)}.
\end{cases}
\end{equation}

 In addition to catalysis, we may also allow random  blocking (inhibition) of reactions by molecules, formalised as follows. 
Suppose that each molecule type $x \in X\setminus B^{-}$ has an associated probability $B_x$ of blocking any given reaction in $R$. We will treat $B_x$ as a random variable taking values in  $[0,1]$ with a common distribution $\hat{\cD}$.  This results in a random assignment of blocking ( i.e. a random subset \blue{$\beta$} of $X \times \R$),  where \blue{$(x,r)  \in \beta$} if $x$ blocks reaction $r$. Let
$\B_{x,r}$ be the event that $x$ blocks $r$.  We assume that:
\begin{itemize} 
\item[($I'_1$)]  $\B=(B_x, x\in X\setminus B^{-})$ is a collection of independent random variables.
\item[($I'_2$)]  Conditional on $\B$, $(\B_{x,r}:  x\in X \setminus C^-, r \in R)$ is a collection of independent events.
\end{itemize}
 Since the distribution of $B_x$ is the same for all $x$, we will use $B$ to denote this random variable, let $\mu_B = \EE[B]$ and, for  $i\geq 0$, let: $$\hat{\lambda}_i =\EE[(1-B)^i].$$
 We also assume that catalysis and inhibition are independent of each other. Formally, this is the following condition:
\begin{itemize} 
\item[($I_3$)] $C$--random variables in ($I_1$, $I_2$) are independent of the $B$--random variables in ($I'_1$, $I'_2$).
\end{itemize}

 Note that $(I_3)$ allows the possibility that a molecule type $x$ both catalyses and blocks the same  reaction $r$ (the effect of this on uRAFs is the same as if $x$ just blocks $r$; (i.e. blocking  is assumed to trump catalysis)). 
 Notice also that $\lambda_0 = \hat{\lambda}_0 = 1$.

\section{Generic results}
\label{gensec}
To state our first result, we require two further definitions.  Let  $\mu_{\rm RAF}$ and $\mu_{\rm uRAF}$   denote the expected number of RAFs and uRAFs (respectively)  arising in $\Q$ under the random process of catalysis and inhibition described. 
 For  integers $k, s\geq1$ let $n_{k,s}$ be the number of F-generated subsets $R'$ of $R$ \blue{that have size $k$ and} for which the total number of non-food products in $X$ produced by reactions in $R'$ is $s$.  Note that $n_{k,s}=0$ for $s>\min\{|X|-F, k M\}$ where $M$ is the maximum number of products of any single reaction.

Part (i) of the following theorem gives an exact expression for $\mu_{\rm RAF}$ and $\mu_{\rm uRAF}$,  which we then use in Parts (ii) and (iii)  to describe the catalysis and inhibition distributions (having a given mean)  that minimise or maximise the expected number of RAFs and uRAFs.  We apply this theorem to particular systems in the next section.

\begin{theorem}
\label{thm1}
Let $\Q$ be any chemical reaction system with food set, accompanied by  catalysis and inhibition distributions $\cD$ and $\hat{\cD}$, respectively.
\begin{itemize}
\item[(i)]
The expected number of RAFs and uRAFs for $\Q$ is given as follows:

\begin{equation}
\label{mumu1}
\mu_{\rm RAF}= \sum_{k\geq 1,s\geq 0}  n_{k,s} \left(\sum_{i=0}^k (-1)^i \binom{k}{i} \lambda_i^{s+c}\right)
\end{equation}
and
\begin{equation}
\label{mumu2}
\mu_{\rm uRAF}= \sum_{k\geq 1, s\geq 0}  n_{k,s} \left(\sum_{i=0}^k (-1)^i \binom{k}{i} \lambda_i^{s+c}\right) \hat{\lambda}_k^{s+b}.
\end{equation}
\item[(ii)] 
Among all distributions $\cD$ on catalysis having a given mean $\mu_C$, the distribution that minimises the expected number of RAFs and uRAFs (for any  inhibition distribution) is the uniform model (i.e. $C = \mu_C$ with probability 1). 
\item[(iii)] Among all distributions $\hat{\cD}$ on inhibition having a given mean $\mu_B$, the following hold: 
\begin{itemize}
\item[(a)] the distribution that minimises the expected number of uRAFs (for any catalysis distribution) is the uniform model ($B = \mu_B$ with probability 1). \item[(b)] the distribution that maximises the expected number of uRAFs (for any catalysis distribution) is the all-or-nothing inhibition model (i.e. $B=1$ with probability $\mu_B$, and $B=0$ with probability $1-\mu_B)$.
\end{itemize}
\end{itemize}
\end{theorem}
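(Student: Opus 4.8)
The plan is to prove Part (i) by linearity of expectation over all nonempty $F$-generated subsets $R'$, and then derive Parts (ii) and (iii) as convexity consequences of the resulting formula. Since $F$-generation depends only on the reactant/product structure of $\Q$ and not on the random catalysis or inhibition, I would first observe that the subsets contributing to $\mu_{\rm RAF}$ can be grouped according to their size $k$ and their number $s$ of non-food products, giving the combinatorial weight $n_{k,s}$; it then remains to compute, for a single such $R'$, the probability that it is a RAF (respectively, a uRAF). For the RAF probability, the molecule types available as catalysts are exactly the $c$ food elements outside $C^{-}$ together with the $s$ non-food products of $R'$ (all of which lie in $X\setminus C^{-}$, since $C^{-}\subseteq F$), so there are $m:=s+c$ potential catalysts, each an independent copy of $C$. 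Conditioning on $\C$ and using ($I_1$), ($I_2$), the probability that a fixed set of $i$ reactions are all uncatalysed factorises to $\EE[\prod_{j=1}^m (1-C_j)^i]=\lambda_i^{m}$; an inclusion--exclusion over the $k$ reactions of $R'$ then yields $\sum_{i=0}^k (-1)^i \binom{k}{i}\lambda_i^{m}$, the inner factor of (\ref{mumu1}). For uRAFs I would repeat this with the $b+s$ available blockers: conditioning on $\B$, the probability that no reaction of $R'$ is blocked factorises to $\hat\lambda_k^{s+b}$, and independence of catalysis and inhibition ($I_3$) lets me multiply the two probabilities, giving (\ref{mumu2}).

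For Part (ii), I would first record the probabilistic reading of the inner factor, namely
\[
P_{k,m} := \sum_{i=0}^k (-1)^i \binom{k}{i}\lambda_i^{m} = \EE\left[\left(1 - \prod_{j=1}^m (1-C_j)\right)^{\!k}\right],
\]
where $C_1,\dots,C_m$ are independent copies of $C$. Because $n_{k,s}\ge 0$ and the blocking factor $\hat\lambda_k^{s+b}$ does not depend on $\cD$, it suffices to show that $P_{k,m}$ is minimised, for each $k$ and $m$, by the uniform model. The obstacle here is that the product $\prod_j(1-C_j)$ couples the variables, so Jensen cannot be applied to the whole product at once; instead I would peel off one coordinate at a time. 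Fixing $C_2,\dots,C_m$, the map $w\mapsto (1-w\prod_{j\ge2}(1-C_j))^k$ is convex on $[0,1]$ (its second derivative is a nonnegative multiple of $(1-w\prod_{j\ge2}(1-C_j))^{k-2}$), so Jensen replaces $C_1$ by its mean $\mu_C$ without increasing $P_{k,m}$; iterating over the remaining coordinates collapses every $C_j$ to the constant $\mu_C$ and gives $P_{k,m}\ge (1-(1-\mu_C)^m)^k$, the value attained by the uniform model. Summing the term-wise inequalities with the nonnegative weights $n_{k,s}$ (and, for uRAFs, the $\cD$-independent factors $\hat\lambda_k^{s+b}$) yields the claim for both $\mu_{\rm RAF}$ and $\mu_{\rm uRAF}$.

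Part (iii) is cleaner, since the dependence on $\hat\cD$ enters only through the factor $\hat\lambda_k^{s+b}$ with $\hat\lambda_k=\EE[(1-B)^k]$, and $t\mapsto t^{s+b}$ is nondecreasing on $[0,\infty)$; thus it is enough to optimise $\hat\lambda_k$ over blocking distributions of mean $\mu_B$, for each $k$. For (a), convexity of $b\mapsto(1-b)^k$ and Jensen give $\hat\lambda_k\ge(1-\mu_B)^k$, with equality for the uniform model, so uniform minimises every term. For (b), the same convexity gives the chord bound $(1-b)^k\le (1-b)\cdot 1+b\cdot 0=1-b$ on $[0,1]$, whence $\hat\lambda_k\le 1-\mu_B$, attained by the all-or-nothing model (for which $\hat\lambda_k=1-\mu_B$ for all $k\ge1$). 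The point making the argument go through globally is that a single distribution---uniform for the minimum, all-or-nothing for the maximum---optimises $\hat\lambda_k$ simultaneously for every $k$, so the term-by-term optimum is also the optimum of the whole sum. I expect the genuine work to lie in Part (i), in the correct bookkeeping of the available catalysts and blockers and the conditioning that produces the $\lambda_i^{s+c}$ and $\hat\lambda_k^{s+b}$ powers, with the extremal statements in (ii) and (iii) following routinely once the formula and its product representation are in hand.
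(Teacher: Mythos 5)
Your proposal is correct and, for Part (i), follows the paper's argument essentially verbatim: linearity of expectation over the classes ${\rm FG}(k,s)$, factorisation of the catalysis and blocking indicators via ($I_3$), and conditioning on the $s+c$ catalyst variables (your inclusion--exclusion over the $k$ reactions is just the paper's binomial expansion of $(1-V)^k$ with $V=\prod_{j=1}^{s+c}(1-C_j)$). The differences lie in the extremal parts. In Part (ii) you assert that ``Jensen cannot be applied to the whole product at once'' because the product couples the variables; that is a misconception. Jensen's inequality $\mathbb{E}[\phi(V)]\geq \phi(\mathbb{E}[V])$ requires nothing of $V$ beyond integrability and convexity of $\phi$ on the range of $V$, and this is exactly what the paper does: convexity of $y\mapsto(1-y)^k$ gives $\mathbb{E}[(1-V)^k]\geq(1-\mathbb{E}[V])^k$, and independence ($I_1$) enters only afterwards, to evaluate $\mathbb{E}[V]=(1-\mu_C)^{s+c}$. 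Your coordinate-by-coordinate peeling (condition on $C_2,\dots,C_m$, apply Jensen in the remaining coordinate, iterate) is a valid workaround, just longer than necessary. In Part (iii)(b) you replace the paper's moment comparison ($\mathbb{E}[Y^m]\leq \mathbb{E}[Y^2]\leq \eta$ for $[0,1]$-valued $Y$) by the chord bound $(1-b)^k\leq 1-b$ on $[0,1]$, giving $\hat{\lambda}_k\leq 1-\mu_B$ directly; this is a slightly cleaner route to the same conclusion. Finally, your closing observation that a single distribution optimises $\hat{\lambda}_k$ simultaneously for every $k$ is the right structural point, but pushing it through the sum silently uses nonnegativity of the weights $n_{k,s}\cdot P_{k,s+c}$; you do have this, since your probabilistic reading exhibits $P_{k,m}$ as the expectation of a nonnegative quantity (the paper makes this explicit by noting $H(k,s)=\mathbb{E}[\II_{R'}]\geq 0$), so that one line should be stated rather than left implicit.
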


\bigskip

\blue{We give the proof of Theorem~\ref{thm1} shortly, following some brief remarks.}

\subsection{Remarks}
\begin{itemize}

\item[(1)]
 If $P_{\rm RAF}$ and $P_{\rm uRAF}$ are the probability that $\Q$ contains a RAF and a uRAF, respectively, then these quantities are bounded above as follows:
$$P_{\rm RAF} \leq \mu_{\rm RAF} \mbox{ and } P_{\rm uRAF} \leq \mu_{\rm uRAF}.$$
This follows from the well-known inequality $\PP(V>0) \leq \EE[V]$ for any non-negative integer-valued random variable $V$, upon taking $V$
to be the number of RAFs (or the number of uRAFs).   We will explore the extent to which $P_{\rm RAF}$ underestimates $\mu_{\rm RAF}$ in Section~\ref{relation}.

\item[(2)]
Theorem~\ref{thm1} makes clear that the only relevant aspects of the network $(X, R)$ for  $\mu_{\rm RAF}$ and $\mu_{\rm uRAF}$ are encoded entirely within the coefficients  $n_{k,s}$ (the two stochastic terms depend only on $r$ and $s$ but not on further aspects of the network structure).  By contrast, an expression for the probabilities $P_{\rm RAF}$ and $P_{\rm uRAF}$  that a RAF or uRAF exists
requires more detailed information concerning the structure of the network. This is due to dependencies that arise in the analysis.
Notice also that Theorem~\ref{thm1} allows the computation of $\mu_{\rm uRAF}$ in $O(|R|^2 \times |X|)$ steps (assuming that the $\lambda_i, \hat{\lambda}_i$ and $n_{k,s}$ values are available). 

\item[(3)] 
Although the computation or estimation  of $n_{k,s}$ may be tricky in general systems,  Eqn.~(\ref{mumu1}) can still be useful (even with little or no information about $n_{k,s}$) for asking comparative  types of questions.  In particular, Parts (ii) and (iii)  provide results that are independent of the details of the network $(X, R, F)$. 
In particular,  Theorem~\ref{thm1}(ii) is consistent with simulation results in \cite{hor16} for Kauffman's binary polymer model, in which variable catalysis rates  (the sparse and all-or-nothing model) led to RAFs appearing at lower average catalysis values ($\mu_C$) than for uniform catalysis. 
 
\item[(4)]
\blue{For the uniform model, note that the term $\left(\sum_{i=0}^k (-1)^i \binom{k}{i} \lambda_i^{s+c}\right)$ in Eqns.~(\ref{mumu1}) and (\ref{mumu2}) simplifies to
$\left[ 1- (1-\mu_C)^{s+c}\right]^k$.}

\end{itemize}

\bigskip

\subsection{\blue{Proof of Theorem~\ref{thm1}}}

For Part (i), recall that  $\pi(R')$ denotes the set of products of reactions in $R'$.

\blue{ For $k, s \geq 1,$ let ${\rm FG}(k,s)$ denote the collection of subsets $R'$ of $R$ that satisfy all of the following three properties:
\begin{itemize}
\item[(i)] $R'$ has size $k$;
\item[(ii)] $R'$ is F-generated, and 
\item[(iii)] the number of non-food molecule types produced by reactions in $R'$ is $s$.
\end{itemize}
}
Thus, $$n_{k,s}= |{\rm FG}(k,s)|.$$ 
For $R' \subseteq  R$, let $\II_{R'}$ be the Bernoulli random variable 
that takes the value $1$ if each reaction in $R'$ is  catalysed by at least one product of a reaction in $R'$ or by an element of $F\setminus C^{-}$, and $0$ otherwise. 
Similarly, let $\hat{\II}_{R'}$  be the Bernoulli random variable 
that takes the value $1$ if no reaction in $R'$ is blocked by the product of any reaction in $R'$ or by an element of $F\setminus B^{-}$.   Then the random variable
$$\sum_{k\geq 1,s\geq 0}  \sum_{R' \in {\rm FG}(k,s)} \II_{\R'}\cdot \hat{\II}_{\R'}$$
counts the number of uRAFs  present, so we have:
\begin{equation}\label{nicer}
\mu_{\rm uRAF} = \EE\left[\sum_{k \geq 1, s\geq 0}  \sum_{R' \in {\rm FG}(k,s)} \II_{\R'}\cdot \hat{\II}_{\R'}\right] 
=\sum_{k\geq 1, s\geq 0}  \sum_{R' \in {\rm FG}(k,s)} \EE\left[\II_{\R'}\cdot \hat{\II}_{\R'} \right] $$
 $$= \sum_{k \geq 1, s\geq 0}  \sum_{R' \in {\rm FG}(k,s)} \EE[ \II_{\R'}]\cdot\EE[ \hat{\II}_{\R'}],
 \end{equation}
where the second equality is by linearity of expectation, and the third equality is by the independence assumption ($I_3$).
Given $\R'\in {\rm FG}(k,s)$,  let $C_1, C_2, \ldots, C_{s+c}$ be the random variables (ordered in any way)  that correspond to the catalysis probabilities of
the $s$ products of $\R'$ and the $c$ elements of $F\setminus C^{-}$. We can then write:

\begin{equation}\label{nice}
\EE[ \II_{\R'}] =\PP(\II_{R'}=1) = \EE[\PP(\II_{R'}=1|C_1, C_2, \ldots, C_{s+c})],
\end{equation}
where the second expectation is with respect to the random variables $C_i$.
The event $\II_{R'}=1$ occurs precisely when each of the $r$ reactions in $R'$ is catalysed by at least one of the $s+c$ elements in 
$(\pi(R')\setminus F) \cup (F\setminus C^{-})$.  By the independence assumption ($I_2$), 
\begin{equation}
\label{epr1}
\PP(\II_{R'}=1|C_1, C_2, \ldots, C_{s+c}) =  \prod_{r' \in R'} \left(1- \prod_{j=1}^{s+c} (1-C_j)\right) = \left(1- \prod_{j=1}^{s+c} (1-C_j)\right)^k.
\end{equation}
Set $V:= \prod_{j=1}^{s+c} (1-C_j)$.  \blue{Eqns.~(\ref{nice}) and (\ref{epr1}) then give:}
\begin{equation}
\label{epr2}
\blue{\EE[ \II_{\R'}]  = \EE[(1-V)^k] = \sum_{i=0}^k (-1)^i \binom{k}{i} \EE[V^i],}
\end{equation}
\blue{where the second equality is from the binomial expansion $(1-V)^k = \sum_{i=0}^k (-1)^i \binom{k}{i} V^i$, and linearity of expectation.}
Moreover,  for each $i\geq 0$, we have:
\begin{equation}
\label{epr3}
\EE[V^i] = \EE\left[ \left[\prod_{j=1}^{s+c} (1-C_j)\right]^i\right]=\EE\left[ \prod_{j=1}^{s+c} (1-C_j)^i\right] =\prod_{j=1}^{s+c} \EE[(1-C_j)^i]\\
 =\prod_{j=1}^{s+c} \lambda_i = \lambda_i^{s+c},
 \end{equation}
where the first two equalities are trivial algebraic identities, the third is by the independence assumption ($I_1$), the fourth is by definition and the last is trivial. 
\blue{Substituting Eqn.~(\ref{epr3}) into (\ref{epr2})} gives:
\begin{equation}
\label{epr4}
\EE[ \II_{\R'}] = \sum_{i=0}^k (-1)^i \binom{k}{i}\lambda_i^{s+c}.
\end{equation}

Turning to inhibition, a RAF subset $R'$ of $R$ in  ${\rm FG}(k,s)$ is a uRAF precisely if no reaction in $R'$ is blocked by any
of the $s+b$  elements of $(\pi(R')\setminus F) \cup (F\setminus B^{-})$.  By the independence assumption ($I'_2$), 
$$\PP(\hat{\II}_{R'}=1|B_1, B_2, \ldots, B_{s+b})  =  \prod_{r' \in R'}\left(\prod_{j=1}^{s+b} (1-B_j)\right)$$
$$= \left(\prod_{j=1}^{s+b} (1-B_j)\right)^k =\prod_{j=1}^{s+b} (1-B_j)^k. $$
Applying expectation (using the independence assumption ($I'_1$)), together with the identity $\EE[(1-B_j)^k] = \hat{\lambda}_k$ gives:
\begin{equation}
\label{epr5}
\EE[\hat{ \II}_{\R'}] =\hat{\lambda}_k^{s+b}.
\end{equation}

Combining \blue{Eqns.~(\ref{epr4}) and (\ref{epr5})} into Eqn.~(\ref{nicer}) gives the first equation in Part (i). The second is then obtained by putting $\hat{\lambda}_i = 1$ for all $i$.

 \bigskip

{\em Parts (ii) and (iii):} 
Observe that the function $u=(1-y)^k$ for $k \geq 1$  is convex and strictly convex when $k>1$. 
Thus, by Jensen's Inequality,  for any  random variable $Y$, we have:
\begin{equation}
\label{in}
\EE[(1-Y)^k] \geq (1-\EE[Y])^k,
\end{equation}
with a strict inequality when $Y$ is nondegenerate and $k>1$.
  
For Part (ii), let $\blue{V= } \prod_{j=1}^{s+c} (1-C_j)$. Then \blue{by the first equality in Eqn.~(\ref{epr2}) we have:}
$$\EE[ \II_{\R'}] = \EE[(1-V)^k],$$
\blue{and by Inequality~(\ref{in}) (with $Y=V$) we have:
\begin{equation}
\label{in2}
\EE[ \II_{\R'}] \geq (1-\EE[V])^k,
\end{equation}
\blue{and the inequality is strict when $V$ is nondegenerate and $k>1$. }
By the independence assumption $(I_1)$, and noting that $\EE[(1-C_j)] = 1-\mu_C$ we have:
\begin{equation}
\label{in3}
\EE[V] = \EE[ \prod_{j=1}^{s+c} (1-C_j)] = \prod_{j=1}^{s+c}\EE[(1-C_j)] = (1-\mu_C)^{s+c},
\end{equation}
and substituting Eqn.~(\ref{in3}) into Inequality~(\ref{in2}) gives:}
$$\EE[ \II_{\R'}]  \geq   (1-(1-\mu_C)^{s+c})^k,$$
with equality only for the uniform model. 
This gives Part (ii). 

\bigskip

For Part (iii)(a), Inequality (\ref{in}) implies that $\hat{\lambda}_k =\EE[(1-B)^k)] \geq (1-\mu_B)^k$.  
\blue{Let $H(k,s) := \left(\sum_{i=0}^k (-1)^i \binom{k}{i} \lambda_i^{s+c}\right)$.  By Eqn. (\ref{epr4}), $H(k,s) = \EE[ \II_{\R'}]$ for $\R' \in {\rm FG}(k,s)$ and so $H(k,s) \geq 0$. 
Thus, by  Eqn.~(\ref{mumu2}) we have:
$$\mu_{\rm uRAF}= \sum_{k\geq 1, s\geq 0}  n_{k,s}  \cdot H(k,s) \cdot  \hat{\lambda}_k^{s+b} \geq  \sum_{k\geq 1, s\geq 0}  n_{k,s} \cdot  H(k,s) \cdot (1-\mu_B)^{k(s+b)}, $$
and the right-hand side of this inequality is the value of $\mu_{\rm uRAF}$ for the uniform model of inhibition. }

\bigskip

For Part (iii)(b), 
suppose that $Y$ is a random variable taking values in $[0,1]$ with mean $\eta$ and let $Y_0$ be the random variable that
takes the value 1 with probability $\eta$ and $0$ otherwise. Then $\EE[Y_0^m] = \eta$ for all $m \geq 1$, and $\EE[Y^m] \leq \EE[Y^2] \leq \eta$ for all $m\geq 1$ (since $Y^m \leq Y^2 \leq Y$ because $Y$ takes values in $[0,1]$); moreover,
$\EE[Y^2]= \eta$ if and only if $\EE[Y(1-Y)] = 0$, which implies that $Y=Y_0$. 
Now apply this to $Y= (1-B)$ and $m=k$ to deduce  for the distributions on $B$ that have a given mean $\mu_B$,  $\hat{\lambda}_k$ is maximised when the distribution  takes the value $1$ with probability $\mu_B$ and
zero otherwise.
\hfill$\Box$

\section{Applications}

\subsection{Inhibition-catalysis trade-offs under the uniform model}

For any model in which catalysis and inhibition are uniform,  Theorem~\ref{thm1} provides a simple prediction concerning how the expected number of uRAFs compares with a model with zero inhibition (and a lower catalysis rate).   To simplify the statement, we will assume $b=c$ and we will write $\mu_{\rm uRAF}(p, tp)$ to denote the dependence of $\mu_{\rm uRAF}$ on 
$\mu_C=p$ and $\mu_B = tp$ for some value of $t$.  
We will also write $p = \nu /N$, where $N$ is the total number of molecule types that are in the food set or can be generated by a sequence of reactions in  $\R$.  We assume in the following result that $p$ is small (in particular, $< 1/2$) and $N$ is large (in particular, $(1-\nu/N)^N$ can be approximated by $e^{-\nu}$). 

The following result (which extends Theorem 2 from \cite{hor16}) applies to any chemical reaction system and provides a lower bound on the expected number of uRAFs in terms of the expected number of RAFs in the system with no inhibition (and half the catalysis rate); its proof relies on Theorem~\ref{thm1}. 
\blue{Roughly speaking, Corollary~\ref{thm2} states that  for any chemical reaction system with uniform catalysis, if one introduces a limited degree of inhibition then by doubling the original catalysis rate, the expected number of uninhibited RAFs is at least as large as the original number of expected RAF  before inhibition was present (and at the original catalysis rate). }

\begin{corollary}
\label{thm2}
For  all non-negative values of $t$ with $t \leq \frac{1}{\nu}\ln(1+e^{-\nu})$, the following inequality holds:
$$
\mu_{\rm uRAF}(2p, tp) \geq \mu_{\rm RAF}(p, 0).
$$
\end{corollary}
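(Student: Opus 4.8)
\subsection{\blue{Proof proposal for Corollary~\ref{thm2}}}

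The plan is to exploit the product form that $\mu_{\rm RAF}$ and $\mu_{\rm uRAF}$ take under uniform catalysis and inhibition, and then to reduce the stated inequality to an elementary one-variable fact. First I would specialize Theorem~\ref{thm1}(i) to the uniform model. By Remark~(4), the catalysis factor attached to a term of type $(k,s)$ is $[1-(1-\mu_C)^{s+c}]^k$, while under uniform inhibition $\hat{\lambda}_k = (1-\mu_B)^k$, so that $\hat{\lambda}_k^{s+b} = (1-\mu_B)^{k(s+b)}$. Setting $b=c$ this yields
$$\mu_{\rm RAF}(p,0) = \sum_{k,s} n_{k,s}\,[1-(1-p)^{s+c}]^k, \qquad \mu_{\rm uRAF}(2p,tp) = \sum_{k,s} n_{k,s}\,[1-(1-2p)^{s+c}]^k\,(1-tp)^{k(s+c)}.$$
Since every coefficient $n_{k,s}$ is nonnegative and the two sums are indexed identically, it suffices to prove the inequality term by term. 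Taking $k$-th roots (legitimate as all quantities are nonnegative and $1-2p>0$ because $p<\tfrac12$), and writing $m := s+c$, the problem collapses to showing $[1-(1-2p)^m]\,(1-tp)^m \ge 1-(1-p)^m$ for every relevant $m$.

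Next I would invoke the approximations permitted by the hypotheses. Because $m = s+c \le N$, I set $y := pm = \nu m/N \in (0,\nu]$ and use $(1-\nu/N)^m \approx e^{-pm}$, so the inequality above becomes $(1-e^{-2y})\,e^{-ty} \ge 1-e^{-y}$. Factoring $1-e^{-2y} = (1-e^{-y})(1+e^{-y})$ and cancelling the strictly positive factor $1-e^{-y}$ (for $y>0$) reduces this to $(1+e^{-y})e^{-ty} \ge 1$, equivalently
$$t \;\le\; \frac{\ln(1+e^{-y})}{y} \;=:\; g(y).$$

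The analytic heart of the argument is then to show that $g$ is strictly decreasing on $(0,\infty)$, so that its minimum over the admissible range $y\in(0,\nu]$ is attained at $y=\nu$, with value $g(\nu) = \frac{1}{\nu}\ln(1+e^{-\nu})$. Differentiating, the sign of $g'(y)$ is that of $-\,\frac{y e^{-y}}{1+e^{-y}} - \ln(1+e^{-y})$, which is negative for all $y>0$; hence $g$ decreases. Consequently the hypothesis $t \le \frac{1}{\nu}\ln(1+e^{-\nu}) = g(\nu) \le g(y)$ secures the term-wise inequality for every $(k,s)$, and re-summing against the weights $n_{k,s}\ge 0$ delivers $\mu_{\rm uRAF}(2p,tp)\ge\mu_{\rm RAF}(p,0)$.

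I expect the main obstacle (beyond the bookkeeping of specializing Theorem~\ref{thm1}) to be twofold. The delicate point is making the exponential approximation uniform in $m$: one must check that replacing $(1-p)^m$, $(1-2p)^m$, $(1-tp)^m$ by the corresponding exponentials over the entire range $m\le N$ does not reverse the inequality, and this is precisely where smallness of $p$ and largeness of $N$ are used. The remaining ingredient, the monotonicity of $g$, is the clean fact that pins down the exact threshold $\frac{1}{\nu}\ln(1+e^{-\nu})$ on $t$; everything else is routine.
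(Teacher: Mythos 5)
Your proposal is correct and follows essentially the same route as the paper's proof: specialize Theorem~\ref{thm1} to the uniform models (via Remark~(4) and $\hat{\lambda}_k=(1-\mu_B)^k$), compare the two sums term by term using $n_{k,s}\geq 0$, factor $1-z^2=(1-z)(1+z)$, and locate the worst case at the largest index ($s+c=N$, i.e.\ $y=\nu$), which is exactly where the threshold $\frac{1}{\nu}\ln(1+e^{-\nu})$ arises. The only real difference is one of bookkeeping: the paper keeps the comparison exact via the discrete inequality $1-(1-2p)^{s}\geq 1-(1-p)^{2s}$ (valid for $p<\tfrac12$) together with monotonicity in $s$ of the exact factor $(1+(1-p)^s)(1-tp)^s$, invoking the exponential approximation only once at $s=N$, whereas you inject the approximation into every term up front and then use monotonicity of $g(y)=\ln(1+e^{-y})/y$ --- the uniformity concern you flag is genuine in principle, but it sits on the same footing as the paper's own argument, since the corollary is stated within an asymptotic framework where $(1-\nu/N)^N\approx e^{-\nu}$ is assumed.
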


\begin{proof}
\blue{By Theorem~\ref{thm1}, and Remark (4) following this theorem,  and noting that $\mu_C =p$ and $\mu_B=tp$ we have:
\begin{equation}
\label{por2}
\mu_{\rm uRAF}(2p, tp)= \sum_{k \geq 1, s\geq 0}  n_{k,s} \left[(1- (1-2p)^{s+c})\cdot (1-tp)^{s+c}\right]^k,
\end{equation}
which can be re-written as:}
\begin{equation}
\label{por2plus}
\mu_{\rm uRAF}(2p, tp)= \sum_{k \geq 1, s\geq c}  n_{k,s-c} \left[(1- (1-2p)^{s})\cdot (1-tp)^{s}\right]^k.
\end{equation}
Thus (putting $t=0$ in this last equation) we obtain:
\begin{equation}
\label{por3}
\mu_{\rm RAF}(p, 0)= \sum_{k \geq 1, s\geq c}  n_{k,s-c} \left[1- (1-p)^{s}\right]^k.
\end{equation}
Now, for each $x\in (0, 0.5)$, we have: $$1-(1-2x)^s\geq 1-(1-x)^{2s} = (1-(1-x)^s)(1+(1-x)^s).$$
Thus (with $x=p$), we see that the term inside the square brackets in Eqn.~(\ref{por2plus}) exceeds the term in square brackets in Eqn.~(\ref{por3}) by a factor of
$(1+(1-p)^s)(1-tp)^s$, and this is minimised when $s = N$ (the largest possible value $s$ can take).  \blue{ Setting $s=N$ and writing $p = \nu/N$} we have 
$$(1+(1-p)^s)(1-tp)^s   \blue{ = (1+ (1-\nu/N)^N(1-t\nu/N)^N} \sim (1+e^{-\nu}) e^{-t\nu}$$ and the last term on the right is at least 1 when $t$ satisfies the stated inequality (namely, $t \leq \frac{1}{\nu}\ln(1+e^{-\nu})$).  \blue{Thus $(1+(1-p)^s)(1-tp)^s \geq 1$, for all $s$ between 1 and $N$ and so}
each term in Eqn.~(\ref{por2plus}) is greater or equal to the corresponding term in square brackets in Eqn.~(\ref{por3}), which justifies the inequality in Corollary~\ref{thm2}.
\end{proof}


\subsection{Explicit calculations for two models on a subclass of networks}
\label{relation}

For the remainder of this section, we consider  {\em elementary} chemical reaction systems (i.e. systems for which each reaction has all its reactants in the food set, as studied in \cite{ste}),  with the further conditions that: 
(i) each reaction has exactly one product,
(ii) different reactions produce different products,
(iii)  no reaction is inhibited, and 
(iv) no food element catalyses any reaction.

We can associate with each such system a directed graph $\G$ on the set $X-F$ of products of the reactions, with an arc from $x$ to $y$ if $x$ catalyses the reaction that produces $y$
(this models a setting investigated in \cite{j1, j2}). 
RAF subsets are then in one-to-one correspondence with the subgraphs of $\G$ for which each vertex has indegree at least one. In particular, a RAF exists if and only if there is a directed cycle in $\G$ (which could be an arc from a vertex to itself).\footnote{An asymptotic study of the emergence of first cycles in large random directed graphs was explored in \cite{bol}.}  In this simple set-up, if $N$ denotes the number of reactions (= number of non-food molecule types) then:
$$n_{k,s} = \begin{cases}
\binom{N}{k}, & \mbox{ if $k=s$;}\\
0, & \mbox{ otherwise.}
\end{cases}
$$ 
Applying Theorem~\ref{thm1}(i) gives:
\begin{equation}
\label{ab1}
\mu_{\rm RAF}  = \sum_{j=1}^N \binom{N}{j} \left ( \sum_{i=0}^j (-1)^i\binom{j}{i} \lambda_i^j\right).
\end{equation}

Regarding catalysis, consider first the {\bf all-or-nothing model}, for which $\lambda_i= 1-\pi=1-\mu_C$ for $i\geq 1$ (and $\lambda_0=1$).  
 Eqn.~(\ref{ab1}) simplifies  to:
\begin{equation}
\label{ab1a}
\mu_{\rm RAF} = 2^N - (2-\mu_C)^N,
\end{equation}
and we provide a proof of this in the Appendix. 

This expression can also be derived by the following direct argument. First, note that a subset $S$ of the $N$ products of reactions does not correspond to a RAF if and only if each of the $|S|$ elements $x$ in $S$ has $C_x=0$. 
The random variable $W=|\{x: C_x =1\}|$  follows the binomial distribution $Bin(N, \mu_C)$, and the  proportion of sets of size $N$ that avoid a given set $S$ of size $m$
is $2^{-m}$.  Thus the expected proportion of subsets that are not RAFs is the expected value of $2^{-W}$ where $W$ is the binomial distribution above. Applying standard combinatorial identities then leads to Eqn.~(\ref{ab1a}).

The probability of a RAF for the all-or-nothing models is also easily computed:
\begin{equation}
\label{ab2a}
P_{\rm RAF}  = 1-(1-\mu_C)^N.
\end{equation}
Notice that one can select  $\mu_C$ to tend to 0  in such a way  $P_{\rm RAF}$ converges to 0 exponentially quickly with $N$ while $\mu_{\rm RAF}$ tends to infinity at an exponential rate with $N$ (this requires $\mu_C$ to decay sufficiently fast with $N$ but not too fast, e.g. $\mu_C = \Theta(N^{-1-\delta})$ for $\delta>0$). 
Comparing Eqns.~(\ref{ab1a}) and (\ref{ab2a}), we also observe the following identity: $$\mu_{\rm RAF}(\mu_C) = 2^N P_{\rm RAF}(\mu_C/2 ).$$

By contrast, for the {\bf uniform model},  applying straightforward algebra to Eqn.~(\ref{ab1}) leads to
\begin{equation}
\label{ab3x}
\mu_{\rm RAF}  = \sum_{j=1}^N \binom{N}{j} \left(1- (1-\mu_C)^j\right)^j.
\end{equation}

\blue{ We now use these formulae to investigate the relationship between $P_{\rm RAF}$ and $\mu_{\rm RAF}$ in elementary chemical reaction systems (satisfying conditions (i)--(iv))  as $N$ becomes large; in particular the impact of the choice of model (all-or-nothing vs uniform) on this relationship. }

\bigskip

\noindent {\bf Asymptotic properties of the two models at the catalysis level where RAFs arise:}    For  the all-or-nothing and uniform models, RAFs arise with a given (positive) probability, provided that $\mu_C$ converges to 0 no faster than  $N^{-1}$
as $N$ grows.  Thus, it is helpful to write $\mu_C = \gamma/N$ to compare their behaviour as $N$ grows.

For the all-or-nothing model, Eqns.~(\ref{ab1a}) and (\ref{ab2a})  reveal that:
$$\frac{\mu_{\rm RAF}}{ P_{\rm RAF}} = 
2^N \frac{\left(1-\left(1-\frac{\gamma}{2N}\right)^N\right)}{\left(1-\left(1-\frac{\gamma}{N}\right)^N\right)}
\sim 2^N \left(\frac{1-\exp(-\gamma/2)}{1-\exp(-\gamma)}\right),$$
where $\sim$ is asymptotic equivalence as $N$ becomes large (with $\gamma$ being fixed),
and so: 
\begin{equation}
\label{abu}
\frac{\mu_{\rm RAF}}{ P_{\rm RAF}} \sim 2^{N-1}(1 + O(\gamma)),
\end{equation}

Let us compare this with the uniform model with the same $\mu_C$ (and hence $\gamma$) value.
It can be shown that when $\gamma< e^{-1}$, we have:
\begin{equation}
\label{ab0}
\lim_{N \rightarrow \infty}  \sum_{j=1}^N \binom{N}{j} \left(1- (1-\gamma/N)^j\right)^j =   \gamma + o(\gamma).
\end{equation}
where $o(\gamma)$ has  order $\gamma^2$ as $\gamma \rightarrow 0$ (a proof is provided in  the Appendix).

By Theorem 1 of  \cite{hor2} (and for any value of $N$ and assuming $\gamma<1$), we have:
\begin{equation}
\label{ab3y}
1-\exp(-\gamma) \leq P_{\rm RAF}  \leq  -\ln(1-\gamma).
\end{equation}
In particular, for small $\gamma$ and the uniform model we have: 
\begin{equation}
\label{ab4y}
P_{\rm RAF} = \gamma + o(\gamma). 
\end{equation}
 Eqns.~(\ref{ab1}), (\ref{ab0}), and  (\ref{ab4y}) provide the following result for the  uniform model when $\gamma < e^{-1}$:  
\begin{equation}
\label{abu2}
\frac{\mu_{\rm RAF}}{ P_{\rm RAF}} \sim 1  + O(\gamma),
\end{equation}
where $\sim$ again denotes asymptotic equivalence as $N$ becomes large (with $\gamma$ fixed). 

Comparing  Eqns.~(\ref{abu}) and (\ref{abu2}) reveals a key difference in the ratio $\mu_{\rm RAF}/ P_{\rm RAF}$ between the all-or-nothing and uniform models when $N$ is large and $\gamma$ is small: the former equation involves an exponential term in $N$, while the second does not.   This can be explained as follows.   In the all-or-nothing model, the existence of a RAF comes down to whether or not there is a reaction $r$  that generates a universal catalyst; when there is, then any subset of the $N$ reactions that contains $r$ is a RAF.  By contrast, with the uniform model at a low catalysis level where RAF are improbable, if a RAF exists, there is likely to be only one.  \blue{Note that the results in this section are particular to  chemical reaction systems that are elementary and  satisfy properties (i)--(iv) as described at the start of this section.}

\section{Concluding comments}
In this paper, we have focused on the expected number of RAFs and uRAFs (rather than the probability of at least one such set existing),  as this quantity can be described explicitly, and generic results described via  this expression can be derived  (e.g. in Parts (ii) and (iii) of Theorem~\ref{thm1} and Corollary~\ref{thm2}).  Even so, the expressions in Theorem~\ref{thm1} involve quantities $n_{k,s}$ that may be difficult to quantify exactly; thus in the second part of the paper, we consider more restrictive types of systems.

In our analysis, we have treated inhibition and catalysis as simple and separate processes. However, a more general approach would allow reactions to proceed under rules that are encoded by Boolean expressions. For example, the expression $(a \wedge b) \vee c  \vee (d \wedge \neg e)$  assigned to a reaction $r$ would allow $r$ to proceed if at least one of the following holds: (i) both $a$ and $b$ are present as catalysts, or (ii) $c$ is present as a catalyst or (iii) $d$ is present as a catalyst and $e$ is not present as an inhibitor. Extending the results in this paper to this more general setting could be an interesting exercise for future work.

\section{Acknowledgements}
\blue{We thank the two reviewers for a number of helpful comments on an earlier version of this manuscript.}

\bigskip

\section{Appendix: Justification of Eqns.~(\ref{ab1a}) and (\ref{ab0}).}

{\em Eqn.~(\ref{ab1a}):}
We use three applications of the standard binomial identity  $\sum_{k=0}^n \binom{n}{k}x^k = (1+x)^n$. 
Set $\lambda = 1-\mu_C$.  Since $\lambda_i = \lambda$ for $i\geq 1$, the binomial identity (with $x=-1, n=j, k= i$) gives:
$$\sum_{i=1}^j (-1)^i \binom{j}{i} \lambda_i^j  = \lambda^j \cdot \sum_{i=1}^j (-1)^i\binom{j}{i} = \lambda^j \cdot \left( \sum_{i=0}^j  \binom{j}{i}(-1)^i - 1\right) =  \lambda^j (0^j-1)= -\lambda^j,$$
for each $j\geq 1$. 
Thus, adding in the additional term (for $i=0$ where $\lambda_0 =1$), we obtain:
$\sum_{i=0}^j (-1)^i \binom{j}{i} \lambda_i^j   =1-\lambda^j.$
 Eqn.~(\ref{ab1}) now gives:
$$\mu_{\rm RAF} = \sum_{j=1}^N \binom{N}{j} (1-\lambda^j) = \sum_{j=0}^N \binom{N}{j} (1-\lambda^j) = 2^N -(1+\lambda)^N,$$
where the third equality involves two further applications of the binomial identity (with $n=N, k=j$, and with one application using $x=1$, the other using $x=\lambda$). \hfill$\Box$

\bigskip

{\em Eqn.~(\ref{ab0}):}    Observe that the $j$-th term on the LHS of Eqn.~(\ref{ab0}) is
$ \binom{N}{j} \left(1- (1-\gamma/N)^j\right)^j$. For $j=1$ this simplifies to $\gamma$.
A simple proof by induction shows that for all $j\geq 1$, and all $x \in (0,1)$, we have: $(1-x)^j \geq 1-xj$ and so $(1-(1-x)^j)^j \leq (xj)^j$.  
Applying this with $x=\gamma/N\in (0,1)$, 
 the LHS of Eqn.~(\ref{ab0}) is bounded below by $\gamma$ (the term where $j=1$) and is bounded above by:
 $$\gamma +  \sum_{j \geq 2} \binom{N}{j} \frac{\gamma^{j}j^j}{N^j} \leq \gamma + \sum_{j \geq 2} \frac{\gamma^{j}j^j}{j!},$$
 where the inequality follows from $\binom{N}{j}/N^j \leq \frac{1}{j!}$.  Next, observe that, by Stirling's formula for $j!$, we have:  $$ \gamma^{j} \cdot \frac{j^j}{j!} \sim \frac{(\gamma e)^j}{\sqrt{2\pi j}},$$
and this term converges to zero at exponential rate  as $j$ increases provided that $\gamma< e^{-1}$; in particular, for $\gamma < e^{-1}$, the sum
$\sum_{j \geq 2} \frac{\gamma^{j}j^j}{j!}$ converges to a constant of order $\gamma^2$ as $\gamma \rightarrow 0$. 
\hfill$\Box$

\begin{thebibliography}{4}

\bibitem{ars} Ars{\`e}ne, S. Ameta, S., Lehman, N.,  Andrew D Griffiths, A.D. \&   Nghe P.  (2018). Coupled catabolism and anabolism in autocatalytic RNA sets. {\em Nucleic Acids Research},  46 (18): 9660--9666.

\bibitem{bol}  Bollob{\'a}s, B. \& Rasmussen, S. (1989). First cycles in random directed graph processes. {\em Discrete Mathematics}, 75, 55--68.

\bibitem{caz18} Cazzolla Gatti, R., Fath.,  B, Hordijk,  W.,  Kauffman, S. \& Ulanowicz,  R. (2018).  Niche emergence as an autocatalytic process in the evolution of ecosystems. {\em Journal of Theoretical Biology}, 454, 110--117. 

 
\bibitem{cor} Cornish-Bowden, A. \&  C{\'a}rdenas, M. L.  (2007). Organizational invariance in (M, R)-systems. {\em Chemistry and Biodiversity}, 4, 2396--2406.

\bibitem{dit2} Dittrich, P. \& Speroni di Fenizio, P. (2007). Chemical organisation theory. {\em Bulletin of Mathematical Biology}, 69, 1199--1231.


\bibitem{fil} Filisetti, A., Villani, M., Damiani, C., Graudenzi, A., Roli, A., Hordijk, W. \& Serra, R. (2014). On RAF sets and autocatalytic cycles in random reaction networks. In: Pizzuti C., Spezzano G. (eds) Advances in Artificial Life and Evolutionary Computation. WIVACE 2014. Communications in Computer and Information Science, vol 445. Springer, Cham.

 \bibitem{gab17}  Gabora, L. \&  Steel, M. (2017). Autocatalytic networks in cognition and the origin of culture. 
{\em Journal of Theoretical Biology}, 63, 617--638. 



\bibitem{hay} Hayden, E. J., von Kiedrowski, G. \& Lehman, N.   (2008). Systems chemistry on ribozyme self-construction: evidence for anabolic autocatalysis in a recombination network. Angew Chem Int Ed Engl. 47(44):8424--8428. 


\bibitem{hor19} Hordijk, W. (2019). A history of autocatalytic sets. {\em Biological Theory}, 14, 224--246. 

\bibitem{hor15} \blue{Hordijk, W. \& Steel, M. (2015). Conditions for evolvability of autocatalytic sets: a formal example and analysis. Origins of Life and Evolution of Biospheres 44(2): 111--124.}

 \bibitem{hor16} Hordijk, W. \& Steel, M. (2016). Autocatalytic sets in polymer networks with variable catalysis distributions, {\em Journal of Mathematical Chemistry}, 54(10): 1997--2021.
 
  \bibitem{hor17}  Hordijk, W. \& Steel, M. (2017). Chasing the tail: The emergence of autocatalytic networks. {\em Biosystems}, 152:1--10.
 
\bibitem{hor2} Hordijk, W., Steel, M. \& Kauffman, S. A. (2019). Molecular diversity required for the formation of autocatalytic sets. {\em Life}, 9: 23.

\bibitem{cat} Huson, D. \& Steel, M. (2020). {\em CatlyNet}. https://github.com/husonlab/catlynet


\bibitem{j1}
Jain, S. \& Krishna, S. (1998). Autocatalytic sets and the growth of complexity in an evolutionary model. {\em Physical Review Letters},
81, 5684--5687. 

\bibitem{j2} Jain, S. \& Krishna, S. (2001). A model for the emergence of cooperation, interdependence, and structure in evolving
networks. {\em Proceedings of the National Academy of Sciences USA} 98, 543--547. 


\bibitem{jar10} Jaramillo, S., Honorato-Zimmer, R., Pereira, U., Contreras, D., Reynaert, B., Hernán-dez, V., Soto-Andrade, J., Cárdenas, M., Cornish-Bowden, A. \& Letelier, J. (2010). (M,R) systems and RAF sets: common ideas, tools and projections. In: Proceedings of the Alife XII Conference. Odense, Denmark, pp. 94--100.


 \bibitem{kau71}  Kauffman, S. (1971).  Cellular homeostasis, epigenesis and replication in randomly aggregated macromolecular systems. {\em Journal of Cybernetics} 1, 71--96.

\bibitem{kau86} Kauffman, S. (1986). Autocatalytic sets of proteins. {\em Journal of Theoretical Biology}, 119, 1--24.

  \bibitem{kau}  Kauffman, S. A. (1993). The Origins of Order, Oxford University Press, Oxford, UK.
  
  \bibitem{liu} Liu, B., Pappas C.G., Ottel{\'e}, J. et al. (2020). Spontaneous emergence of self-replicating molecules containing nucleobases and amino acids. {\em Journal of the American Chemical Society}, 142(9):4184--4192. 
 
 
 \bibitem{mos} Mossel, E. \& Steel, M. (2005).  Random biochemical networks: the probability of self-sustaining autocatalysis.
{\em Journal of Theoretical Biology}, 233(3), 327--336.

\bibitem{ste} Steel, M., Hordijk, W. \& Xavier, J. C. (2018). Autocatalytic networks in biology: structural theory and algorithms. {\em Journal of the Royal Society Interface}, 16: 20180808

\bibitem{sza} \blue{ Szathm{\'a}ry, E. (2000). The evolution of replicators. {\em Philosophical Transactions of Royal Society B: Biological Sciences},   355(1403), 1669--1676.}

\bibitem{vai} Vaidya, N., Manapat, M.L., Chen, I.A., Xulvi-Brunet, R., Hayden, E.J. \& Lehman, N. (2012). Spontaneous network formation among cooperative RNA replicators. {\em Nature}, 491, 72--77.

\bibitem{vas} Vasas, V., Fernando, C., Santos, M., Kauffman, S. \&  Szathm{\'a}ry, E. (2012). Evolution before genes. {\em Biology Direct},  7:1.

\bibitem{vir} \blue{Virgo, N. \& Guttenberg, N. (2015).  Heredity in messy chemistries. Artificial Life Conference Proceedings, 27: 325--332.}

 \bibitem{xav}
Xavier, J. C., Hordijk, W., Kauffman, S., Steel M. \& Martin, W. F. (2020). Autocatalytic chemical networks at the origin of metabolism. 
{\em Proceedings of the Royal Society B: Biological Sciences}, 287:20192377.


\end{thebibliography}
\end{document}